\newcommand{\myuline}[1]{
  \uline{\phantom{#1}}%
  \llap{\contour{white}{#1}}%
}
\renewcommand{\emph}[1]{\myuline{#1}}
\newtheorem{lemma}{Lemma}
\newtheorem{proposition}{Proposition}
\newtheorem{definition}{Definition}
\newtheorem{remark}{Remark}
\DeclareSymbolFont{stmry}{U}{stmry}{m}{n}
\DeclareMathDelimiter\llbracket{\mathopen}{stmry}{"4A}{stmry}{"71}
\DeclareMathDelimiter\rrbracket{\mathclose}{stmry}{"4B}{stmry}{"79}
\DeclareMathSymbol\leftrightarrowtriangle\mathbin{stmry}{"5D}
\DeclareMathSymbol\leftarrowtriangle\mathrel{stmry}{"5E}
\DeclareMathSymbol\rightarrowtriangle\mathrel{stmry}{"5F}
\DeclareMathSymbol\inplus\mathrel{stmry}{"41}
\DeclareMathSymbol\niplus\mathrel{stmry}{"42}
\NewDocumentCommand\paren{r()}{\left(#1\right)}
\NewDocumentCommand\Paren{r()}{\left(\hspace{-2pt}\middle( #1\middle)\hspace{-2pt}\right)}
\RenewDocumentCommand\brack{r[]}{\left[#1\right]}
\newcommand\set[1]{\left\{#1\right\}}
\newcommand\setcompr[2]{\left\{#1~\middle|~#2\right\}}
\NewDocumentCommand\tuple{r<>}{\left\langle#1\right\rangle}
\NewDocumentCommand\Tuple{r<>}{\left\langle\hspace{-2pt}\middle\langle #1\middle\rangle\hspace{-2pt}\right\rangle}
\NewDocumentCommand\sem{r[]}{\left\llbracket {#1}\right\rrbracket}
\NewDocumentCommand\variables{r()}{\left\lfloor #1 \right\rfloor}
\NewDocumentCommand\size{r||}{\left|#1\right|}
\NewDocumentCommand\Functor{ m s d()}{%
    \ensuremath{\IfValueTF{#3}{\IfBooleanTF{#2}{{#1}{{#3}}}{{#1}\paren(#3)}}{#1}}\xspace%
}
\newcommand\f[1]{\Functor{\mathcal{#1}}}
\NewDocumentCommand\relarrow{ O{.08} r() O{midway, above} o r() }{
  \IfNoValueTF{#4}{
    \draw[->](#2) to (#5);
  }{
    \draw[->](#2) to node[#3] {#4} (#5);
  }
  \draw ($(#2)!.5!(#5)$) circle (#1)
}
\newcommand\relation[1][.5]{
  \mathrel{
    \tikz{
      \coordinate(s) at (0,0);
      \coordinate(t) at (#1,0);
      \relarrow(s)(t);
    }
  }
}
\newcommand\natrel[1][.5]{
  \mathrel{
    \tikz{
      \coordinate(s) at (0,0);
      \coordinate(t) at (#1,0);
      \draw[]($(s.east) + (0,.03)$) to ($(t.west) - (.03,-.03)$);
      \draw[]($(s.east) - (0,.03)$) to ($(t.west) - (.03,.03)$);
      \draw[]($(t.west) + (-.06,.06)$) -- (t.west) -- ($(t.west) - (.06,.06)$);

      \draw ($(s)!.5!(t)$) circle (.08);
    }
  }
}
\newcommand\converse{^{\circ}}
\newcommand\under{\mathbin{\setminus}}
\renewcommand\over{\mathbin{/}}
\newcommand\then{\mathbin{;}}
\newcommand\eqrel{~\leftrightarrowtriangle~}
\newcommand\leqrel{~\rightarrowtriangle~}
\newcommand\neqrel{~\mathclap{~\;\,\slash}{\leftrightarrowtriangle}~}
\newcommand\nleqrel{~\not\rightarrowtriangle~}
\newcommand\mmodels{\mathrel{|\hspace{-.1em}{\models}}}
\newcommand\trleq{\mathrel{\preccurlyeq}}
\newcommand\Reg{\ensuremath{\mathrm{Reg}}\xspace}
\newcommand\KA{\ensuremath{\mathrm{KA}}\xspace}
\newcommand\Set{\ensuremath{\mathbf{Set}}\xspace}
\newcommand\Rel{\ensuremath{\mathbf{Rel}}\xspace}
\newcommand\Repr{\ensuremath{\mathbf{Repr}}\xspace}
\newcommand\xRepr{\ensuremath{\mathbf{xRepr}}\xspace}
\newcommand\PreO{\ensuremath{\mathbf{PreOrd}}\xspace}
\newcommand\pset{\f{P}}
\newcommand{\Mon}{\mathrm{Mon}}
\newcommand{\List}{\mathrm{List}}
\newcommand{\Pom}{\mathrm{Pom}}
\newcommand\eqdef{\mathrel{{\mathop:}=}}
\newcommand\id{\mathop{\mathnormal{id}}}
\newcommand\unit{\mathbbm{1}}
\NewDocumentCommand\univ{ s r() }{%
\IfBooleanTF{#1}{{#2}^{\forall}}{{\paren(#2)}^{\forall}}%
}
\renewcommand{\epsilon}{\varepsilon}
\renewcommand{\phi}{\varphi}
\newcommand\R{\f R}
\newcommand\E{\f E}
\newcommand\T{\f T}
\NewDocumentCommand\var{r()}{{\mathrm{var}}\paren(#1)}
\newcommand{\M}{\mathcal M}
\newcommand{\V}{\mathcal V}
\newcommand{\Alg}{\mathfrak{A}}
\title{
{\Huge\scshape Representations}\\
{\large\itshape A meta-model for system  analysis}\\
\vspace{1cm}
{\normalsize EPISEN \& LACL, Université Paris-Est Créteil Val de Marne\\
\href{mailto:paul@brunet-zamansky.fr}{\tt paul@brunet-zamansky.fr}\\
\url{https://paul.brunet-zamansky.fr}
}
}
\author{Paul Brunet}
\begin{document}
\maketitle

\begin{abstract}
The formal analysis of automated systems is an important and growing industry.
This activity routinely requires new verification frameworks to be developed to tackle new programming features, or new considerations (bugs of interest).
Often, one particular property can prove frustrating to establish: completeness of the logic with respect to the semantics.
In this paper, we try and make such developments easier, with a particular attention on completeness.
Towards that aim, we propose a formal (meta-)model of software analysis systems (SAS), the eponymous \emph{Representations}.
This model requires few assumptions on the SAS being modeled, and as such is able to capture a large class of such systems.
We then show how our approach can be fruitful, both to understand how existing completeness proofs can be structured, and to leverage this structure to build new systems and prove their completeness.
\end{abstract}

\section{Introduction}
\label{sec:intro}
As automated systems are given more numerous and varied tasks, so do formal analysis questions grow in importance and diversity.
When in still recent times the field was dominated by the study of critical systems and their potential failures, we see nowadays questions related to e.g. quality of service being tackled by formal analysis.
Here both theory and practice agree: new questions require new systems of analysis.
Indeed, theorems like~\cite{rice_classes_1953,godel_uber_1931} can be interpreted as suggesting that any formal system has either limited expressivity or intractable algorithmic properties.
And so, new SAS (software analysis systems) are being routinely defined and studied.

We consider that the essence of such a system is given by a semantic relation, together with a logic of some sort to reason on programs.
The correctness of such a system hinges on two properties: soundness and completeness.
Soundness states that any valid judgment in the logic is satisfied by the semantics.
Completeness on the other hand requires that any semantically sound judgment can be established by the logic.
It is no surprise then that completeness is usually the difficult bit in correctness proofs: while soundness can be established by checking the soundness of individual rules of the logic, completeness asks the prover to produce a derivation for every true semantic fact, for which no general method seems applicable.

Our approach builds on category theory and relation algebra to provide a point-free theory of representations.
We recall relevant elements of relation algebra in Section~\ref{sec:prelim}.
This allows for proofs that use few assumptions, and constructions that proceed in a modular fashion.
Concretely, we define in Section~\ref{sec:repr} a \emph{representation} to be a structure
\[R=\tuple<T,E,{\models},{\leq}>\]
where $T$ and $E$ are abstract sets, modeling respectively the semantic and syntactic domain of a SAS.
${\models}$ is a binary relation, linking semantic elements to syntactic elements, i.e. a subset of $T\times E$.
The final ingredient is a preorder ${\leq}$ over $E$, modeling the valid derivations in the logic of the SAS under consideration.
To formulate soundness and completeness of a representation structure, we turn to relation algebra:
\begin{align}
 {\models}\then{\leq}&\leqrel{\models}\label{intro:ax:sound}\\
 {\models}\under{\models}&\leqrel{\leq}\label{intro:ax:compl}
\end{align}
Axiom \eqref{intro:ax:sound} models soundness: translated as a first-order (F.O.) sentence, it states that whenever we have $t\models e$ and $e\leq f$, we also have $t\models f$.
This means that whenever the logic derives $e\leq f$, every element in the semantic interpretation of $e$ is also in the semantic interpretation of $f$.
Conversely, axiom \eqref{intro:ax:compl} states completeness: two elements $e,f\in E$ are related by the left-hand side of the entailment iff for every $t\in T$, $t\models e$ implies $t\models f$.
The right-hand side of the axiom then deduces from that fact that $e\leq f$. This accurately models the requirement that any true semantic inclusion is derivable in the logic.

Having set up this model, we may then try and study the theory of representations to derive results about SAS.
The contributions of this paper can be described in this light as follows:
\begin{itemize}
 \item
 In the remainder of Section~\ref{sec:repr}, we study the transformations between representations.
 We define a category of representations, with a notion of homomorphism between representations.
 We show this category to be Cartesian and consider the properties of the product.
  We then look at proof techniques to establish completeness.
  We propose a deductively complete template for such proofs, and show how several proofs from the literature can be cast as instances of our template.
 \item
 In Section~\ref{sec:hor}, we lift our attention to a higher perspective, by considering parametric representations.
 These are modeled by functors from the category of sets and functions to that of representations.
 We characterize these objects, and show how they can be used to define multi-tiered representations.
\end{itemize}
We believe this will allow future completeness proofs to be better structured, easier to follow, and more succinct, as some proof obligations may be discharged thanks to general results.
It is our hope that this work provides the basis for a modular meta-system, able to generate sound and complete SAS in a transparent fashion.
Such a tool would allow formal analysis techniques to be applied to a broader class of systems and class of questions about them.

Because of the author's personal experience, examples are mostly chosen from the field of extensions of Kleene algebra.
Nevertheless, we hope we provide enough intuitions to convince the reader that our approach is not restricted to the study of these formalisms, but can instead capture a large class of formalisms.

\newpage
\section{Preliminaries}
\label{sec:prelim}
We assume the reader is familiar with basic concepts from category theory and relation algebra. We recall below some of the latter, mainly to fix notations.

\paragraph{Relations} A binary relation $x:A\relation B$ is a subset of $A\times B$. To avoid ambiguity, (containment being a relation of interest), we shall denote the set-theoretic inclusion of relations by $\leqrel$, writing $x\leqrel y$ instead of $x\subseteq y$. Similarly, equality of relations is denoted by $\eqrel$. The union and intersection of relations will be written using the standard $\cup$ and $\cap$ symbols.
Given sets $A,B,C$, and relations $x:A\relation B$ and $y:B\relation C$, we define the following constants and operations: $1_A:A\relation A$ is the identity (diagonal) relation over $A$, $x\converse:B\relation A$ is the converse of $x$, and $x\then y:A\relation C$ is the sequential composition of $x$ and $y$, containing all pairs $(a,c)$ such that there is some $b\in B$ such that $(a,b)\in x$ and $(b,c)\in y$. If $s:A\relation A$ is a relation, its reflexive-transitive closure is written $s^\star$. Given $z:A\to C$, the (left-)residual of $z$ by $x$, written $x\under z:B\relation C$, is the set of pairs $(b,c)$ such that for every $a\in A$, $(a,b)\in x$ entails $(a,c)\in z$. This operator can also be defined by the following universal equivalence:
\begin{equation}
y\leqrel x\under z~\Leftrightarrow~ x\then y\leqrel z.\label{eq:galois}                                                                 \end{equation}
A dual operator $z\over y$, satisfying $x\leqrel z\over y$ iff $x\then y\leqrel z$ can be defined by $z\over y\eqdef\paren((y\converse)\under(z\converse))\converse$.

\paragraph{Functions} Given functions $f:A\to B$ and $g:B\to C$, the composition $g\circ f:A\to C$ is the function mapping an $a\in A$ to $g(f(a))$. The identity function over $A$ is written $\id_A$. Functions may be injectively mapped to relations by considering the graph of the function. We write $f_\ast:A\relation B$ for the relation $\setcompr{(a,f(a))\in A\times B}{a\in A}$. Its converse is written $f^\ast:B\relation A$. Functions can be identified among relations by two properties: univalence ($f^\ast\then f_\ast\leqrel 1_B$) and totality ($1_A\leqrel f_\ast\then f^\ast$).
Indeed, any function satisfies both, and any univalent and total relation is the graph of some function.

We will make use of the following property of functions and residuals~\cite[Exercise 4.35]{bird_algebra_1997}:
\begin{equation}
f_\ast\then(x\under y)\then g^\ast\eqrel(x\then f^\ast)\under (y\then g^\ast).\label{eq:fun-res}
\end{equation}


\paragraph{Products and coproducts}
The Cartesian product of two sets $A_1$ and $A_2$ is written $A_1\times A_2$, with the two projections written as $\pi_i:A_1\times A_2\to A_i$. The coproduct of $A_1$ and $A_2$ is written $A_1+A_2$, with the two injections $\iota_i:A_i\to A_1+A_2$.
The coproduct can be fully axiomatized in relation algebra by the following properties:
\begin{itemize}
 \item the $\iota_i$s are injective: $1_{A_i}\eqrel \iota_{i\ast}\then \iota_i^\ast$;
 \item their codomains are disjoint: $\iota_{1\ast}\then \iota_2^\ast\eqrel 0$ (or equivalently $\iota_{2\ast}\then \iota_1^\ast\eqrel 0$);
 \item their codomains cover $A_1+A_2$: $1_{A_1+A_2}\leqrel \iota_1^\ast\then \iota_{1\ast}\cup \iota_2^\ast\then \iota_{2\ast}$.
\end{itemize}

\paragraph{Preorders}
A preorder $x:A\relation A$ are usually defined as reflexive and transitive relations.
This can be written as $1_A\leqrel x$ and $x\then x\leqrel x$.
Equivalently, a preorder is a fixpoint of the reflexive-transitive closure: $x\eqrel x^\star$, or a fixpoint of the self-residual: $x\eqrel x\under x$.
As an example, notice that containment between sets is the self-residual of the membership relation : $\in\under\in\eqrel\subseteq$.

\newpage
\section{The category of representations}
\label{sec:repr}

\subsection{The model \& some of its instances}
The protagonist in this paper is the notion of \emph{representation}.
It is a crude but rather general formalism to describe analysis frameworks.
More specifically it aims at capturing systems where syntactic specifications and combinatorial behaviours are related by a satisfaction relation.
Another relation, over specifications, is there to reason about semantic containment.
Formally, we set:
\begin{definition}[representation]
  A \emph{representation} is a tuple $R=\tuple<T,E,\models,{\leq}>$ where $T$ is a set of \emph{traces}, $E$ a set of \emph{expressions}, $\models:T\relation E$ is a \emph{satisfaction} relation, and ${\leq}$ is a preorder on $E$ such that ${\models}\then{\leq}\leqrel\models$. It is called \emph{exact} when $\paren({\models}\under{\models})\leqrel{\leq}$.
\end{definition}

For any such structure, we define an interpretation function $I:E\to \pset T$ (where $\pset$ is the powerset functor) by setting $I(e)\eqdef\setcompr{t\in T}{t\models e}$. This entails the relational identity: ${\in}\then I^*\eqrel {\models}$. With this notation, the requirements for being a representation and for being exact may look more familiar:
\begin{align*}
  \paren({\models}\then{\leq}\leqrel\models)
  &\Leftrightarrow \forall e,f,\; e\leq f\text{ entails }I(e)\subseteq I(f)\\
  \paren({\models}\under{\models}\leqrel{\leq})
  &\Leftrightarrow \forall e,f,\; I(e)\subseteq I(f)\text{ entails }e\leq f.
\end{align*}
Indeed, the first axiom is soundness of ${\leq}$ with respect to $I$, and the second is completeness.

\begin{remark}
In particular, we have that $(e,f)\in({\models}\under{\models})$ if and only if $I(e)\subseteq I(f)$.
Thus, the relation ${\models}\under{\models}$ is semantic containment. A natural question is then what is being represented by the relation ${\models}\over{\models}$ ?
By definition we have:
\begin{align*}
 (s,t)\in({\models}\over{\models})
 &\Leftrightarrow \forall e,\;t\models e\Rightarrow s\models e.
\end{align*}
In other words, every expression satisfied by $t$ is also satisfied by $s$.
This is a preorder on traces, and often plays a role in the semantic domain, although not in the present paper.  
\end{remark}
\begin{remark}
  Equivalently, we could have started with an interpretation $I: E\to \pset T$, and defined ${\models}$ as ${\in}\then I^\ast$.
  Soundness of ${\leq}$ with respect to $I$ would entail the axiom ${\models}\then{\leq}\leqrel{\models}$, and its completeness would mean the representation is exact.
\end{remark}

Examples of representations abound.
We now briefly examine how concepts from logical semantics and universal algebra may be reformulated in our setting.

\paragraph{Logics}
Consider a logic $L$, given as a set of formulas $\Phi_L$ and an entailment relation $\vdash_L$.
Let $\M=\tuple<\size|\M|,{\models}_\M>$ be a structure such that ${\models}_\M:\size|\M|\relation\Phi_L$ is a satisfaction relation linking semantic elements to individual formulas.
Such a structure is called a \emph{model} of the logic $L$ iff whenever $\phi\vdash_L \psi$ and $m\models_\M \phi$, we also have $m\models_\M\psi$.
Therefore, for every model of the logic, the structure $\tuple<\size|M|,\Phi_L,{\models}_\M,\vdash_L>$ is a representation.
It is exact when the model is fully-abstract, that is when from the sentence ``for all $m\in \size|\M|$, $m\models_\M \phi$ only if $m\models_\M \psi$'' we can deduce the existence of a valid entailment $\phi\vdash_L\psi$.

\paragraph{Varieties}
Consider a functional signature $\Sigma$, a set of variables $\V$ and a (finite) set of axioms $\E:T_\Sigma(\V)\relation T_\Sigma(\V)$ (where $T_\Sigma(\V)$ is the set of terms built from variables in $\V$ and operators in $\Sigma$).
This generates a class of $\Sigma$-algebras known as a \emph{variety}.
Consider such an algebra $\Alg$, and an interpretation map $I:\V\to\size|\Alg|$.
We may build a representation in this case as $\tuple<\size|\Alg|,T_\Sigma(\V),I^{\Alg\ast},\equiv_{\E}>$, where $I^{\Alg}:T_\Sigma(\V)\to\size|\Alg|$ is the homomorphic extension of $I$, and $\equiv_{\E}$ is the smallest congruence containing $\E$.
Note that in this case the preorder is an equivalence relation.
The above representation is exact exactly in situations where the algebra $\Alg$ is free in the variety generated by $\Sigma$ and $\E$.
This point of view may be refined in the case where $\size|\Alg|$ has the structure of a set of sets, i.e. there is a domain (set) $D$ such that $\size|\Alg|\subseteq \pset D$.
In this case, we prefer the following representation as a model:
\[\tuple<D,T_\Sigma(\V),{\in}\then I^{\Alg\ast},\equiv_{\E}>.\]
This works nicely e.g. when $\Alg$ is the complex algebra (powerset algebra) associated with another $\Sigma$-algebra.
\paragraph{Application to Kleene algebras}
While not a finitely axiomatisable variety, Kleene algebra may be understood as a variety generated by infinitely many identities~\cite{krob_complete_1990,salomaa_two_1966}, or a quasi-variety generated by a finite set of Horn sentences~\cite{kozen_completeness_1994}.
Either way, we get (exact) representations, as the details of the axiomatization being used is of little consequence to our model.
Let $\Reg(A)$ be the set of regular expressions (i.e. using concatenation, union ($+$), singleton empty word, empty language ($0$), and Kleene star) over some finite alphabet $A$.
We assume some axiomatization \KA, and consider the relation $e\leq_\KA f$ on $\Reg(A)$ defined as $\KA\vdash e+f=f$, where the later judgment denotes the equational theory defined by \KA (either style of axiomatization can be used here).
As we sketched in the previous paragraph, any model of \KA, i.e. a structure $\Alg$ that can interpret each operator of \Reg and such that any valid judgment $\KA\vdash e=f$ entails an identity $I^\Alg(e)=I^\Alg(f)$ for any $I:A\to\size|\Alg|$, can be cast as a representation.
Furthermore, the free Kleene algebra yields an exact representation $\KA(A)\eqdef\tuple<A^\star,\Reg(A),{\models}_\KA,{\leq}_\KA>$ when $w\models_\KA e$ is defined as ``$w$ belongs to the rational language associated with $e$''.
The same construction yields exact representations when applied to the free models of commutative Kleene algebra~\cite{pilling_algebra_1970}, concurrent Kleene algebra~\cite{kappe_concurrent_2018}, synchronous Kleene algebra~\cite{wagemaker_completeness_2019}, Kleene algebra with tests~\cite{kozen_kleene_1997} and many other extensions of Kleene algebra. 
As our development makes few assumptions on the details of the underlying set theory, it is our belief that nominal extensions of Kleene algebra~\cite{brunet_formal_2016,kozen_completeness_2015} would also be amenable to such treatment.

\paragraph{Trivial representations}
It will prove useful to notice that any relation $x:A\relation B$ generates an exact representation $R^x\eqdef\tuple<A,B,x,x\under x>$.
We dub this the \emph{trivial representation} associated with $x$.
An interesting instance of this construction arises from the membership relation. Let $A$ be a set, the representation
\[R^\in\eqdef\tuple<A,\pset A,\in,\subseteq>.\]
is always exact.
Furthermore, since for every exact representation it holds that ${\leq}\eqrel{\models}\under{\models}$, an exact representation is always isomorphic to the trivial representation associated with its satisfaction relation.

\subsection{Morphisms and products}
In this section we define the category \Repr and establish it as Cartesian.
\begin{definition}[morphism]
 Given two representations $R_1$ and $R_2$, a morphism from the former to the latter is a pair $\tuple<\phi,\psi >:R_1\to R_2$ such that:
 \begin{itemize}
  \item $\phi:E_1\to E_2$ is a function and $\psi :T_2\relation T_1$ is a relation;
  \item $\phi$ is order preserving: $\phi^\ast\then{\leq}_1\leqrel{\leq}_2\then\phi^\ast$;
  \item the interpretation of $\phi(e)$ can be computed from that of $e\in E_1$ using $\psi $:
  \[{\models}_2\then\phi^\ast\eqrel\psi\then{\models}_1.\]
 \end{itemize}
\end{definition}
The last condition is expressed in F.O. style as
\[\forall s\in T_2,\,e\in E_1,\;s\models_2\phi(e)\Leftrightarrow \exists t\in T_1:s\mathrel{\psi}t\wedge t\models_1 e.\]
Or, using set comprehension:
\[I_2(\phi(e))=\setcompr{s}{\exists t\in T_1:s\mathrel{\psi}t\wedge t\in I_1(e)}.\]

It is easy to check that for every representation $\tuple<\id_E,1_T>$ is a morphism, and that given $\tuple<\phi,\psi >:R_1\to R_2$ and  $\tuple<\phi',\psi '>:R_2\to R_3$, $\tuple<\phi'\circ \phi,\psi'\then\psi >$ is a morphism from $R_1$ to $R_3$. We may thus define \Repr as the category whose objects are representations and arrows are morphisms.
We also define $\xRepr$, the full-sub category of \Repr consisting of exact representations and morphisms between them.

\Repr is a Cartesian category, with the product of two representations $R_1$ and $R_2$ defined as follows:
\[R_1\times R_2\eqdef\tuple<T_1+T_2,E_1\times E_2,{\mmodels},{\leqq}>\]
where ${\mmodels}\eqrel \iota_1^\ast\then{\models}_1\then\pi_1^\ast\cup \iota_2^\ast\then{\models}_2\then \pi_2^\ast$, and ${\leqq}\eqrel\pi_{1\ast}\then{\leq}_1\then\pi_1^\ast\cap\pi_{2\ast}\then{\leq}_2\then\pi_2^\ast$.
In other words, we have $I(e_1,e_2)\simeq I_1(e_1)\uplus I_2(e_2)$, and $(e_1,e_2)\leqq(f_1,f_2)$ iff $e_i\leq_if_i$ for both $i\in\set{1,2}$. As stated in the following proposition, this construction preserves exactness, i.e. the category $\xRepr$ is also Cartesian.

\begin{proposition}\label{lem:exact-product}
   $R_1\times R_2$ is the Cartesian product of $R_1$ and $R_2$ in $\Repr$.
  Furthermore, if both $R_1$ and $R_2$ are exact, so is their product.
\end{proposition}
\begin{proof}
\begin{enumerate}
 \item First, we check that $R_1\times R_2$ is an object in $\Repr$, i.e. a representation.
  We check both requirements: ${\leqq}$ is a preorder that is sound w.r.t. ${\mmodels}$.
 \begin{itemize}
  \item ${\leqq}$ is a preorder:
  \begin{itemize}
   \item For every preorder $x:A\relation A$ and function $f: B\to A$, $f_\ast\then x\then f^\ast$ is a preorder, therefore both $\pi_{i\ast}\then{\leq}_i\then\pi_i^\ast$ are preorders.
   \item The intersection of two preorders is itself a preorder, so ${\leqq}$ is a preorder.
  \end{itemize}
  \item ${\mmodels}\then{\leqq}\leqrel{\mmodels}$:
  \begin{align*}
   {\mmodels}\then{\leqq}&\eqrel \paren(\bigcup_i\iota_i^\ast\then{\models}_i\then\pi_i^\ast)
   \then{\leqq}\\
   &\eqrel \bigcup_i\iota_i^\ast\then{\models}_i\then\pi_i^\ast\then{\leqq}
   \\
   &\eqrel \bigcup_i\paren(\iota_i^\ast\then{\models}_i\then\pi_i^\ast\then\paren(\bigcap_j\pi_{j\ast}\then{\leq}_j\then\pi_j^\ast))
   \\
   &\leqrel \bigcup_i\iota_i^\ast\then{\models}_i\then\pi_i^\ast\then\pi_{i\ast}\then{\leq}_i\then\pi_i^\ast
   \\
   &\leqrel \bigcup_i\iota_i^\ast\then{\models}_i\then{\leq}_i\then\pi_i^\ast
   \\
   &\leqrel \bigcup_i\iota_i^\ast\then{\models}_i\then\pi_i^\ast
    \eqrel{\mmodels} 
  \end{align*}
 \end{itemize}

  Now we need to look at the requirements for being a Cartesian product.
   We only sketch the proof. We first define the projections $\Pi_1$ and $\Pi_2$:
  \begin{align*}
    \Pi^\phi_i:\,&E_1\times E_2\to E_i
    &\Pi^\psi_i:\,&T_i\relation T_1 + T_2\\
    &\Pi^\phi_i\eqdef \pi_i
    &&\Pi^\psi_i\eqdef \iota_{i\ast}
  \end{align*}
  It is easy enough to check that the $\Pi_i$s are indeed morphisms of representations.

  Next, one needs to check that given a representation $R$ and another pair of morphisms $f_i=\tuple<\phi_i,\psi_i>:R\to R_i$, there is a unique morphism $g:R\to R_1\times R_2$ such that for each $i$, $f_i=\Pi_i\circ g$.
  \begin{itemize}
    \item \emph{Existence.} Let $e\in E_R$, we define $g^\phi(e)=\tuple<\phi_1(e),\phi_2(e)>$.
    We also define $g^\psi\eqrel\bigcup_{i\in\set{1,2}}\iota_i^\ast\then\psi_i$.
    $g$ is a homomorphism, and satisfies the required identities $f_i=\Pi_i\circ g$.
    \item \emph{Unicity.} Assume another such morphism. Pose $g'=\tuple<\phi',\psi'>$.
    The equations $\Pi_i\circ g=f_i= \Pi_i\circ g'$ constrain $g'$ enough so we can prove that $g=g'$.
  \end{itemize}
\item
Finally we need to consider the case where both $R_i$ are exact.

  First, observe that since for each $i\in\set{1,2}$, $\iota_i^\ast\then{\models}_i\then\pi_i^\ast\leqrel{\mmodels}$, we have by contravariance of $-\under x$ that:
  \[{\mmodels}\under{\mmodels}\leqrel\paren(\iota_i^\ast\then{\models}_i\then\pi_i^\ast)\under {\mmodels}\]

  Next, we prove that for each $i\in\set{1,2}$, ${\models}_i\then\pi_i^\ast\then{\mmodels}\under{\mmodels}\leqrel{\models}_i\then\pi_i^\ast$.
  \begin{align*}
    {\models}_i\then\pi_i^\ast\then{\mmodels}\under{\mmodels}
    &\leqrel
    \iota_{i\ast}\then\iota_i^\ast\then{\models}_i\then\pi_i^\ast\then(\iota_i^\ast\then{\models}_i\then\pi_i^\ast)\under{\mmodels}\\
    &\leqrel
    \iota_{i\ast}\then{\mmodels}\\
    &\eqrel
    \iota_{i\ast}\then \iota_1^\ast\then{\models}_1\then\pi_1^\ast\cup \iota_{i\ast}\then \iota_2^\ast\then{\models}_2\then \pi_2^\ast\\
    &\eqrel
    1\then{\models}_i\then\pi_i^\ast\cup 0\then{\models}_{\neg i}\then \pi_{\neg i}^\ast\\
    &\eqrel {\models}_i\then\pi_i^\ast
  \end{align*}

  Therefore, we obtain that ${\mmodels}\under{\mmodels}\leqrel({\models}_i\then\pi_i^\ast)\under({\models}_i\then\pi_i^\ast)$ for both values of $i$. Hence we get:
  \[{\mmodels}\under{\mmodels}\leqrel\bigcap_i({\models}_i\then\pi_i^\ast)\under({\models}_i\then\pi_i^\ast).\]
  Using totality and univalence, we know that for any functions $f,g$ and relations $x,y$ of appropriate types, $(x\then f^\ast)\under(y\then g^\ast)\eqrel f_\ast\then(x\under y)\then g^\ast$.
  Hence:
  \[{\mmodels}\under{\mmodels}\leqrel\bigcap_i\pi_{i\ast}\then({\models}_i\under{\models}_i)\then\pi_i^\ast.\]
  Using exactness of the $R_i$s, we conclude:
  \begin{align*}
    &{\mmodels}\under{\mmodels}\leqrel\bigcap_i\pi_{i\ast}\then({\models}_i\under{\models}_i)\then\pi_i^\ast
  \leqrel\bigcap_i\pi_{i\ast}\then{\leq}_i\then\pi_i^\ast\eqrel{\leqq}.\qedhere
  \end{align*}
  \end{enumerate}

\end{proof}

\subsection{Reductions}
Here we consider how one can establish the exactness of a representation.
We give this question special attention because it is usually the difficult part of correctness proofs: while soundness can be established by checking that individual inference rules are safe, completeness requires one to build a proof tree for every true semantic inclusion.
For this later task, it seems no general method is applicable.
We show that in the case of representations, we can provide such a general method, and analyze a few examples from the literature as instances of our construction.
\begin{definition}[reduction]
A reduction from a representation $R_1$ to $R_2$ is a triple $\tuple<\phi,\tau,\psi>:R_1\leadsto R_2$ such that:
 \begin{itemize}
  \item $\phi:E_1\to E_2$ and $\tau:E_2\to E_1$ are functions and $\psi:T_2\relation T_1$ is a relation;
  \item $\tau$ is order preserving: $\tau^\ast\then{\leq}_2\leqrel{\leq}_1\then\tau^\ast$;
  \item the interpretation of $\phi(e)$ can be computed from that of $e\in E_1$ using $\psi$:
  \[{\models}_2\then\phi^\ast\eqrel\psi\then{\models}_1;\]
  \item the composition $\tau\circ\phi$ yields a ${\leq}_1$-equivalent expression:
  \[\tau^\ast\then\phi^\ast\leqrel {\leq}_1\quad\quad\phi_\ast\then\tau_\ast\leqrel {\leq}_1.\]
 \end{itemize}
\end{definition}

Like morphisms, reductions compose, so we may easily check that $\leadsto$ (the binary relation relating $R_1$ and $R_2$ if there is a reduction from from $R_1$ to $R_2$) is a preorder.
Such transformations are very useful to establish exactness, as witnessed by the following fact:
\begin{proposition}
  $R_1$ is exact if and only if there exists an exact representation $R_2$ and a reduction $R_1\leadsto R_2$.
\end{proposition}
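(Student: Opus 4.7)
The plan is to handle the two directions separately. The forward direction is immediate: if $R_1$ is exact, pick $R_2\eqdef R_1$ and take the \emph{identity reduction} $\tuple<\id_{E_1},\id_{E_1},1_{T_1}>:R_1\leadsto R_1$. All four conditions reduce to trivial inclusions once one notes that $1_{E_1}\leqrel\leq_1$ by reflexivity, and $\models_1\then\id_{E_1}^\ast\eqrel\models_1\eqrel 1_{T_1}\then\models_1$. So the interesting part is the converse.

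For the backward direction, assume $R_2$ is exact and $\tuple<\phi,\tau,\psi>:R_1\leadsto R_2$ is a reduction. The goal is to derive $\models_1\under\models_1\leqrel\leq_1$ from the available facts. The driving idea is to transport the residual ``through'' $\phi$ to a residual about $\models_2$, use exactness of $R_2$, and then use $\tau$ to descend back to $\leq_1$. Concretely, I plan the following point-free chain:
\begin{align*}
\models_1\under\models_1
&\leqrel \paren(\psi\then\models_1)\under\paren(\psi\then\models_1) \\
&\eqrel \paren(\models_2\then\phi^\ast)\under\paren(\models_2\then\phi^\ast) \\
&\eqrel \phi_\ast\then\paren(\models_2\under\models_2)\then\phi^\ast \\
&\leqrel \phi_\ast\then\leq_2\then\phi^\ast \\
&\leqrel \leq_1.
\end{align*}
The first step uses the Galois property \eqref{eq:galois} to reduce the claim to $\models_1\then(\models_1\under\models_1)\leqrel\models_1$ (which is standard) precomposed by $\psi$. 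The second equality is the reduction axiom $\models_2\then\phi^\ast\eqrel\psi\then\models_1$. The third equality is an instance of \eqref{eq:fun-res} with $f=g=\phi$. The fourth step is exactness of $R_2$.

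The final step is the main obstacle, and is where the two composition axioms of a reduction pay off. The expected computation is
\begin{align*}
 \phi_\ast\then\leq_2\then\phi^\ast
 &\leqrel \phi_\ast\then\tau_\ast\then\tau^\ast\then\leq_2\then\phi^\ast \\
 &\leqrel \leq_1\then\tau^\ast\then\leq_2\then\phi^\ast \\
 &\leqrel \leq_1\then\leq_1\then\tau^\ast\then\phi^\ast \\
 &\leqrel \leq_1\then\leq_1\then\leq_1 \\
 &\leqrel \leq_1,
\end{align*}
inserting $\tau_\ast\then\tau^\ast$ using totality of the function $\tau$ (i.e. $1\leqrel\tau_\ast\then\tau^\ast$), then applying successively $\phi_\ast\then\tau_\ast\leqrel\leq_1$, order-preservation of $\tau$ ($\tau^\ast\then\leq_2\leqrel\leq_1\then\tau^\ast$), the symmetric composition axiom $\tau^\ast\then\phi^\ast\leqrel\leq_1$, and finally transitivity of $\leq_1$. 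The main subtle point is remembering that it is \emph{totality} of $\tau$ that enables the key insertion; univalence is not what is needed here.
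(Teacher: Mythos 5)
Your proof is correct and follows essentially the same route as the paper: the backward direction is the same point-free chain (Galois property to pass to the residual of $\models_2\then\phi^\ast\eqrel\psi\then\models_1$, then \eqref{eq:fun-res}, exactness of $R_2$, insertion of $\tau_\ast\then\tau^\ast$ by totality of $\tau$, and the two composition axioms with transitivity of $\leq_1$). The only cosmetic difference is in the forward direction, where you take the identity reduction $R_1\leadsto R_1$ while the paper reduces to the trivial representation $R^{\models_1}$; both are equally immediate.
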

\begin{proof}
 Assuming $R_1$ is exact, then $\tuple<\id_{E_1},\id_{E_1},1_{T_1}>$ is a reduction from $R_1$ to $R^{{\models}_1}$, the trivial representation associated with ${\models}_1$.
 Indeed:
 \begin{align*}
  {\id}^\ast\then{\models}_1\under{\models}_1&\eqrel{\models}_1\under{\models}_1\eqrel{\leq}_1\eqrel{\leq}_1\then{\id}^\ast\\
  {\models}_1\then{\id}^\ast&\eqrel{\models}_1\eqrel 1\then{\models}_1\\
  {\id}^\ast\then{\id}^\ast&\eqrel 1\leqrel {\leq}_1\\
  {\id}_\ast\then{\id}_\ast&\eqrel 1\leqrel {\leq}_1
 \end{align*}

 On the other hand, assume $R_2$ is exact and we have $\tuple<\phi,\tau,\psi>:R_1\leadsto R_2$.
 First, we establish that ${\models}_1\under{\models}_1\leqrel\paren({\models}_2\then\phi^\ast)\under\paren({\models}_2\then\phi^\ast)$, using \eqref{eq:galois}:
 \begin{align*}
  {\models}_1\under{\models}_1\leqrel\paren({\models}_2\then\phi^\ast)\under\paren({\models}_2\then\phi^\ast)
\quad&\Leftrightarrow\quad{\models}_2\then\phi^\ast\then{\models}_1\under{\models}_1\leqrel{\models}_2\then\phi_\ast\\
{\models}_2\then\phi^\ast\then{\models}_1\under{\models}_1&\eqrel\psi\then{\models}_1\then{\models}_1\under{\models}_1\\
&\leqrel\psi\then{\models}_1\eqrel{\models}_2\then\phi^\ast.
 \end{align*}
 We then conclude:
\begin{align*}
 {\models}_1\under{\models}_1&\leqrel\paren({\models}_2\then\phi^\ast)\under\paren({\models}_2\then\phi^\ast)\\
 &\eqrel\phi_\ast\then{\models}_2\under{\models}_2\then\phi^\ast
 \tag*{by \eqref{eq:fun-res}}\\
 &\eqrel\phi_\ast\then{\leq}_2\then\phi^\ast
 \tag*{by exactness of $R_2$}\\
 &\leqrel \phi_\ast\then\tau_\ast\then\tau^\ast\then{\leq}_2\then\phi^\ast
 \tag*{by totality of $\tau$}\\
 &\leqrel \phi_\ast\then\tau_\ast\then{\leq}_1\then\tau^\ast\then\phi^\ast\\
 &\leqrel {\leq}_1  \then{\leq}_1\then{\leq}_1\leqrel{\leq}_1.\qedhere
 \end{align*}
\end{proof}

\begin{remark}
 When giving talks about this work, there's been a recurring question: are reductions morphisms of some kind?
 Looking at the types of these objects, there are two morphism candidates to consider :
 $\tuple<\phi,\psi>:R_1\to R_2$ and
 $\tuple<\tau,{\psi\converse}>:R_2\to R_1$.
 The definition of a reduction gives us one of the two conditions for being a morphism in each case, but not the other:
 \begin{align*}
  \phi^\ast\then{\leq}_1&\nleqrel{\leq}_2\then\phi^\ast
  &{\models}_2\then\phi^\ast&\eqrel\psi\then{\models}_1\\
  \tau^\ast\then{\leq}_2&\leqrel{\leq}_1\then\tau^\ast
  &{\models}_1\then\tau^\ast&\neqrel\psi\converse\then{\models}_2
 \end{align*}
 Thus they both fail to be morphisms in general.
 However, in the category $\xRepr$, the pair $\tuple<\phi,\psi>$ is in fact a morphism from $R_1$ to $R_2$.
\end{remark}

Several proof strategies used in the literature may be construed as reductions.
We review some instances from the realm of extensions of Kleene algebra.

\paragraph{Kleene algebra with tests}
We quote from~\cite{kozen_kleene_1997}, in the beginning of section 7, where the authors prove completeness of the axioms of KAT with respect to the guarded strings model:
\begin{quotation}
... every term $p$ can be transformed into a KAT-equivalent term $\hat p$ such that $G(\hat p)$, the set of guarded strings represented by $\hat p$, is the same as $R(\hat p)$, the set of strings represented by $\hat p$ under the ordinary interpretation of regular expressions.
\end{quotation}
This is the crux of their proof, and can easily be cast as a reduction between the representation of KAT and that of KA.
In this case, the function $\hat{\,}$ would play the role of $\phi$, while both $\tau$ and $\psi$ would be the identity.
It is a simple exercise to check that the conditions laid down by Kozen and Smith are equivalent to requiring that $\tuple<\hat{\,},id,1>$ is a reduction.

\paragraph{Synchronous Kleene algebra}
In~\cite{wagemaker_completeness_2019}, again in section 7, we find a completeness proof.
This proof goes into two steps.
First, completeness is established for a subset of expressions.
Second, it is shown that every expression can be translated into this sub-syntax while remaining provably equivalent.
Each of these steps is an instance of reduction.

For the first, we can define a representation whose set of expressions is precisely the sub-syntax in question, and satisfaction and reasoning order are simply the restrictions to this set of the corresponding relations in the general representation SKA.
The syntax of this sub-representation happens to be the set of regular expressions over a particular alphabet.
And so the proof goes on to provide a function $\Pi$ that translates the semantic elements of SKA into words, and to prove that the image by $\Pi$ of the SKA interpretation of an expression coincides with its regular language.
This can be seen as a reduction from the sub-representation to that of KA (over the special alphabet), with both $\phi$ and $\tau$ being the identity, and $\psi$ being the relation $\Pi^\ast$.

For the second step, the translation function, also called $\hat{\,}$, has the property that $\hat e\equiv_{SF_1}e$ and $\hat e$ belongs to the sub-syntax of the previous step.
As in the case of KAT, this would be a reduction $\tuple<\hat{\,},id,1>$ from SKA to its exact sub-representation.

As we described earlier, reductions compose, so the whole proof may be construed as a single reduction.
The result of computing the composition of the two reductions above yields the triple $\tuple<\hat\,,id,\Pi^\ast>$.

\paragraph{Concurrent Kleene algebra}
For CKA, in~\cite{kappe_concurrent_2018} the authors rely on the completeness of weak bi-Kleene algebra (BKA), a formalism with the same syntax.
The proof proceeds by computing a so-called syntactic closure (Definition 4.1):
\begin{quotation}
 Let $e \in T$ . We say that $e\downarrow$ is a closure of $e$ if both $e \equiv_{CKA} e\downarrow$ and $\sem[e\downarrow]_{BKA} = \sem[e ]_{CKA}$ hold.
 \footnote{Some notations have been modified.}
\end{quotation}
The paper goes on to prove that the existence of such a closure function implies completeness (Lemma 4.1), and then builds one (Theorem 4.1).
  The function $\downarrow$ may be understood as playing the role of $\phi$, and in this case we again set both $\tau$ and $\psi$ to the identity, so this is still an instance of reduction between the representation CKA and the representation BKA.

In fact the specificities of this case would allow for weaker proof obligations to obtain completeness.
(Note however that the weaker obligations are still logically equivalent to the above definition, in the presence of the simplifying hypotheses.)
We may summarize what makes this case easier than the general one by the following remarks:
\begin{align*}
 E_{CKA}&=E_{BKA}
 &T_{CKA}&=T_{BKA}\\
 {\leq}_{BKA}&\leqrel{\leq}_{CKA}
 &{\models}_{BKA}&\leqrel{\models}_{CKA}
\end{align*}
In this case, one only needs a function $\downarrow$ such that
\begin{align*}
 {\models}_{CKA}&\leqrel{\models}_{BKA}\then\downarrow^\ast
 &\downarrow^\ast&\leqrel{\leq}_{CKA}
\end{align*}
One can prove that under these hypotheses, these conditions are equivalent to $\tuple<\downarrow,id,1>$ being a reduction.
We state the general result, after the following definition.

\begin{definition}[syntactic closure]\label{def:closure}
 A syntactic closure between two representations sharing the same sets of expressions and traces, written $R_i=\tuple<T,E,{\models}_i,{\leq}_i>$ for $i\in\set{1,2}$, is a function $\downarrow:E\to E$ such that:
\begin{align*}
 {\models}_1&\leqrel{\models}_2\then\downarrow^\ast
 &\downarrow^\ast&\leqrel{\leq}_1
\end{align*}
\end{definition}
\begin{lemma}\label{lem:closure}
 In the situation of Definition~\ref{def:closure}, if additionally we have ${\leq}_2\leqrel{\leq}_1$, ${\models}_2\leqrel{\models}_1$, and $R_2$ exact, then for any function $\downarrow:E\to E$, $\tuple<\downarrow,id,1>$ is a reduction iff
  $\downarrow$ is a syntactic closure.
\end{lemma}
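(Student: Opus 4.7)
The plan is to unpack the reduction axioms in the special case $\tau=\id$, $\phi={\downarrow}$, $\psi=1_T$ and match them against the closure axioms. The four reduction obligations simplify to: (a) $\leq_2\leqrel\leq_1$, which is a standing hypothesis; (b) $\models_2\then\downarrow^\ast\eqrel\models_1$; (c) $\downarrow^\ast\leqrel\leq_1$; and (d) $\downarrow_\ast\leqrel\leq_1$. The closure axioms are (i) $\models_1\leqrel\models_2\then\downarrow^\ast$ and (ii) $\downarrow^\ast\leqrel\leq_1$. Note that (c) and (ii) coincide, and the $\geqrel$ half of (b) is exactly (i); so ``reduction implies closure'' is immediate.

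For the converse, two obligations remain: the $\leqrel$ half of (b), and (d). The first is a short monotonicity chain: from $\models_2\leqrel\models_1$ and $\downarrow^\ast\leqrel\leq_1$ one gets $\models_2\then\downarrow^\ast\leqrel\models_1\then\leq_1$, and the representation axiom for $R_1$ then absorbs the trailing $\leq_1$ into $\models_1$.

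The crux is (d), which says that $\downarrow$ is not only deflationary but also inflationary for $\leq_1$; the closure definition alone only delivers the former. The plan is to lift a semantic inequality to an order inequality using the exactness of $R_2$: since $\leq_2\leqrel\leq_1$, it suffices to prove $\downarrow_\ast\leqrel\leq_2$, and by exactness this is $\downarrow_\ast\leqrel\models_2\under\models_2$, which via \eqref{eq:galois} is equivalent to $\models_2\then\downarrow_\ast\leqrel\models_2$. Starting from $\models_2\leqrel\models_1$ and applying (i), I get $\models_2\then\downarrow_\ast\leqrel\models_2\then\downarrow^\ast\then\downarrow_\ast$, and univalence of the function $\downarrow$ (i.e. $\downarrow^\ast\then\downarrow_\ast\leqrel 1_E$) collapses the right-hand side to $\models_2$. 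I expect this lifting through exactness to be the main conceptual point of the proof: it is precisely the interplay between exactness of $R_2$ and the assumption that $R_2$ sits below $R_1$ at both the satisfaction and the order level that turns the one-sided closure condition into the full $\leq_1$-equivalence of $e$ and $\downarrow(e)$ demanded by a reduction.
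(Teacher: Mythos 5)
Your proposal is correct and follows essentially the same route as the paper's proof: the reduction-to-closure direction is immediate from unpacking, the missing half of $\models_2\then\downarrow^\ast\eqrel\models_1$ is absorbed via the soundness axiom of $R_1$, and the key obligation $\downarrow_\ast\leqrel\leq_1$ is obtained exactly as in the paper, by reducing it through \eqref{eq:galois} to $\models_2\then\downarrow_\ast\leqrel\models_2$ (using $\models_2\leqrel\models_1$, the closure axiom, and univalence of $\downarrow$) and then applying exactness of $R_2$ together with $\leq_2\leqrel\leq_1$.
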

\begin{proof}
  We prove both implications.
 \begin{enumerate}
   \item Let $\downarrow$ be a syntactic closure. We will show that $\tuple<\downarrow,id,1>$ is a reduction.
   \begin{align*}
    &id^\ast\then{\leq}_2\eqrel{\leq}_2\leqrel{\leq}_1.\\
    &{\models}_2\then\downarrow^\ast\leqrel{\models}_1\then{\leq}_1\leqrel 1\then{\models}_1\leqrel{\models}_2\then\downarrow^\ast\\
    &id^\ast\then\downarrow^\ast\eqrel\downarrow^\ast\leqrel{\leq}_1
   \end{align*}
   For the last inclusion, we observe that
   \[{\models}_2\then\downarrow_\ast\leqrel{\models}_1\then\downarrow_\ast\leqrel{\models}_2\then\downarrow^\ast\then\downarrow_\ast\leqrel{\models}_2\]
   Therefore $\downarrow_\ast\leqrel{\models}_2\under{\models}_2$. We conclude by exactness of $R_2$:
   \[\downarrow_\ast\then id_\ast\eqrel\downarrow_\ast\leqrel{\models}_2\under{\models}_2\leqrel{\leq}_2\leqrel{\leq}_1.\]
   \item Let $\downarrow$ be such that $\tuple<\downarrow,id,1>$ is a reduction. Then:
   \begin{align*}
   &{\models}_1\eqrel 1\then{\models}_1\eqrel{\models}_2\then\downarrow^\ast\\
   &\downarrow^\ast\eqrel id^\ast\then\downarrow^\ast\leqrel{\leq}_1.\qedhere 
   \end{align*}
  \end{enumerate}

\end{proof}

In particular, if $R_2$ is exact, the existence of a syntactic closure entails the exactness of $R_1$.

\newpage
\section{Higher order representations}
\label{sec:hor}
We turn our sights on set-indexed representations, structures intended to model e.g. the free monoid over some set of generators.
These may be seen as functors from \Set to \Repr.
To better understand them, we must investigate competing notions of \emph{naturality} (as in natural transformations) for binary relations.

\subsection{Natural relations}
\newcommand\F{\mathbf F}
\newcommand\G{\mathbf G}
Let $\F,\G$ be a pair of \Set endofunctors, and $\rho$ be a set-indexed family of relations such that $\rho_A:\F A\relation \G A$.
A first notion of naturality can be formulated as follows:
\begin{definition}[natural relation]
 $\rho$ is called \emph{natural} iff for every function $f:A\to B$ and every pair $\tuple<a,a'>\in\rho_A$, we have $\tuple<\F f(a),\G f(a')>\in\rho_B$. Relationally:
 \[\F f^\ast\then\rho_A\leqrel\rho_B\then\G f^\ast.\]
\end{definition}
This states that the relation $\rho$ is preserved by functor application.
This rather ad-hoc definition will be the weakest (most permissive) notion of naturality for relations. We denote a natural relation by writing $\rho:\F\natrel\G$.

Examples of such relations include for instance preorders ${\leq}_{\E}$ over a term algebra $T_\Sigma(A)$ generated from a set of axioms $\E:T_\Sigma(\V)\relation T_\Sigma(\V)$ (where $\V$ is a fixed set of variables).

To define our second notion of naturality, we will use the canonical functor from \Set to \Rel, mapping objects to themselves and arrows $f$ to $f_\ast$. We denote the composition of a \Set-endofunctor $\F$ with it by $\F_\ast$, meaning $\F_\ast A=\F A$ and $\F_\ast f=(\F f)_\ast$.
Now we define:
\begin{definition}[left-linear relation]
 A \emph{left linear} relation from $\F$ to $\G$ is a natural transformation $\F_\ast\Rightarrow\G_\ast$. Relationally, for every function $f:A\to B$:
 \[\F f_\ast\then\rho_B\eqrel\rho_A\then\G f_\ast.\]
\end{definition}

What does this have to do with linearity (or leftism for that matter)? To understand we need relation liftings~\cite{kurz_relation_2016}.
We don't have the space to properly define these (and refer the interested reader to \textit{op. cit.}), so we shall only mention that whenever a \Set endofunctor $\F$ preserves weak pullbacks, it extends to a monotone \Rel endofunctor $\bar\F$ such that $\bar \F(f_\ast)=\F f_\ast$.
Examples of such functors abound.
Most functors corresponding to datatypes fall into that category, like lists, trees, finite graphs...
We shall be especially interested in the following functors, all of which share this property:
\begin{itemize}
 \item $T_\Sigma$, the terms over the signature $\Sigma$;
 \item $\pset$, the powerset functor;
 \item $\List$, the list functor;
 \item $\Pom$, the pomset functor.
\end{itemize}
In the following, we assume both $\F$ and $\G$ preserve weak pullbacks.

We now make the following observation:
\begin{lemma}
 $\rho$ is left-linear iff for every relation $x:A\relation B$ we have:
 \[\bar \F x\then \rho_B\leqrel\rho_A\then\bar\G x\]
\end{lemma}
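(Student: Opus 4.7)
The statement is an equivalence, so I will prove both implications. Throughout, the key facts I will invoke are standard properties of relation liftings of weak-pullback-preserving functors: functoriality of $\bar\F$ and $\bar\G$, the identity $\bar\F(f_\ast)\eqrel \F f_\ast$ built into the definition, and preservation of converses, which yields $\bar\F(f^\ast)\eqrel \F f^\ast$ (where $\F f^\ast$ denotes $(\F f)^\ast$). I will also repeatedly use univalence $\F f^\ast\then \F f_\ast\leqrel 1_{\F B}$ and totality $1_{\F A}\leqrel \F f_\ast\then \F f^\ast$ of the graph of $\F f$, and likewise for $\G$.

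For the $(\Leftarrow)$ direction, assume the displayed inequality holds for every $x$, and fix a function $f:A\to B$. Instantiating the hypothesis at $x\eqdef f_\ast$ directly produces $\F f_\ast\then \rho_B\leqrel \rho_A\then \G f_\ast$. For the reverse inclusion, I instantiate at $x\eqdef f^\ast:B\relation A$ and unfold the liftings to obtain $\F f^\ast\then \rho_A\leqrel \rho_B\then \G f^\ast$. Pre-composing with $\F f_\ast$ and using totality yields $\rho_A\leqrel \F f_\ast\then \rho_B\then \G f^\ast$, and then post-composing with $\G f_\ast$ and collapsing using univalence of $\G f$ gives the desired $\rho_A\then \G f_\ast\leqrel \F f_\ast\then \rho_B$.

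For the $(\Rightarrow)$ direction, the main idea is to decompose an arbitrary relation as a span through its tabulation. Given $x:A\relation B$, I set $X\eqdef\setcompr{(a,b)}{(a,b)\in x}$ with projections $p:X\to A$ and $q:X\to B$, so that $x\eqrel p^\ast\then q_\ast$ and hence $\bar\F x\eqrel \F p^\ast\then \F q_\ast$ and $\bar\G x\eqrel \G p^\ast\then \G q_\ast$ by functoriality. Applying left-linearity at $q$ rewrites $\bar\F x\then \rho_B$ as $\F p^\ast\then \rho_X\then \G q_\ast$. From left-linearity at $p$, pre-composed with $\F p^\ast$ and simplified using univalence of $\F p$, I extract the auxiliary inequality $\F p^\ast\then \rho_X\then \G p_\ast\leqrel \rho_A$. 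Inserting $\G p_\ast\then \G p^\ast$ between $\rho_X$ and $\G q_\ast$ via totality of $\G p$ and then applying the auxiliary inequality concludes $\bar\F x\then \rho_B\leqrel \rho_A\then \G p^\ast\then \G q_\ast\eqrel \rho_A\then \bar\G x$.

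The main obstacle is not depth but bookkeeping: the asymmetry between $f_\ast$ and $f^\ast$ in the univalence and totality laws dictates which compositions can be collapsed, so one must orient each step carefully. The essential insight is that tabulation, combined with functoriality and converse-preservation of $\bar\F$ and $\bar\G$, reduces the relational naturality condition to two successive applications of left-linearity at functions.
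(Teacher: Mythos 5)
Your proof is correct and follows essentially the same route as the paper: for $(\Leftarrow)$ you instantiate the relational hypothesis at $f_\ast$ and $f^\ast$ and clean up with totality/univalence, and for $(\Rightarrow)$ you tabulate $x$ as a span $A\xleftarrow{p}X\xrightarrow{q}B$ and apply left-linearity at the two projections, exactly as the paper does with $dx$ and $cx$. The only difference is cosmetic bookkeeping (your ``auxiliary inequality'' packages the paper's left-linearity-at-$dx$ plus univalence step), so there is nothing to add.
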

\begin{proof}
 Assume for every $x:A\relation B$ we have $\bar \F x\then \rho_B\leqrel\rho_A\then\bar\G x$, and let $f:A\to B$.
 Then in particular we do have
 \[\F f_\ast\then\rho_B\eqrel\bar \F(f_\ast)\then\rho_B\leqrel\rho_A\then\bar\G(f_\ast)\eqrel\rho_A\then\G f_\ast.\]
 For the converse inclusion, we use totality and univalence:
 \[\rho_A\then\G f_\ast
   \leqrel\F f_\ast\then\F f^\ast\then\rho_A\then\G f_\ast
   \leqrel\F f_\ast\then\rho_B\then\G f^\ast\then\G f_\ast
   \leqrel\F f_\ast\then\rho_B.
   \]

   Now, assume $\rho$ is left-linear, and let $x:A\relation B$. Seeing $x$ as a set of pairs, we write $dx:x\to A$ for the first projection restricted to $x$ and $cx:x\to B$ for the second. This allows one to see the relation $x$ as the span $A\xleftarrow{dx}x\xrightarrow{cx} B$, and we get $x\eqrel dx^\ast\then cx_\ast$. Therefore, we obtain:
   \begin{align*}
\bar \F x\then \rho_B
   \eqrel \bar \F dx^\ast\then\bar\F cx_\ast\then\rho_B
   &\eqrel \bar \F dx^\ast\then\rho_x\then\bar\G cx_\ast\\
   &\leqrel \bar \F dx^\ast\then\rho_x\then\bar \G dx_\ast\then\bar \G dx^\ast\then\bar\G cx_\ast\\
   &\eqrel \bar \F dx^\ast\then\bar \F dx_\ast\then\rho_A\then\bar \G dx^\ast\then\bar\G cx_\ast\\
   &\leqrel \rho_A\then\bar \G dx^\ast\then\bar\G cx_\ast
   \eqrel\rho_A\then\bar\G x\qedhere
   \end{align*}
\end{proof}
This invites the notion of right-linearity: $\rho$ is \emph{right-linear} iff for every $x:A\relation B$ we have $\rho_A\then\bar \G x\leqrel\bar \F x\then\rho_B$.
Equivalently, $\rho$ is right-linear iff for every function $f:A\to B$ we have $\F f^\ast\then\rho_A\eqrel\rho_B\then\G f^\ast$.
Note that $\rho$ is right-linear iff $\rho\converse$ is left-linear.

An important example of right-linear relation is the membership relation $\in:Id\natrel \pset$ (where $Id$ is the identity functor).
Indeed, for a given $f:A\to B$, the definition of the powerset functor tells us that $\pset f(X)=\setcompr{f(x)}{x\in X}$. This can be reformulated relationally as:
\[{\in}\then \pset f^\ast\eqrel f^\ast\then{\in}\]
which means $\in$ is right-linear.

Finally, we define full linearity:
\begin{definition}[linear relation]\label{def:linear}
A \emph{linear} relation is a natural transformation from $\bar \F$ to $\bar \G$. Relationally, for every relation $x:A\relation B$ we have:
 \[\bar \F x\then \rho_B\eqrel\rho_A\then\bar\G x.\]
\end{definition}
Observe that a relation is linear iff it is both left- and right-linear.

To get an example of linear relation, consider terms over a signature $\Sigma$. Let $\ell:T_\Sigma\Rightarrow List$ be the natural transformation that extracts from a term its list of variables. For instance:
\begin{align*}
 &\ell(f(g(a,b),h(c)))=[a;b;c]
 &&\ell(g(a,f(h(b),c)))=[a;b;c]
\end{align*}
We can define an equivalence relation $\approx$ over terms as $\ell_\ast\then\ell^\ast$, i.e. we relate two terms if they share the same list of variables.
It is a simple exercise to check that $\approx:T_\Sigma\natrel T_\Sigma$ is a linear relation.

To complete this subsection, we turn our attention to natural transformations between $\F$ and $\G$:
\begin{lemma}
 Let $\phi$ be a set-indexed family of functions such that for every set $A$ $\phi_A:\F A\to\G A$. $\phi$ is a natural transformation iff $\phi^\ast$ is right-linear iff $\phi_\ast$ is left-linear.
\end{lemma}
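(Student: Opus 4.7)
The plan is to reduce the three properties to a single relational equation, with the three formulations related by symmetry under converse. First, I would translate naturality $\G f \circ \phi_A \eqrel \phi_B \circ \F f$ to the relational level using the fact that the embedding of functions into relations is functorial, i.e. $(g \circ f)_\ast \eqrel f_\ast \then g_\ast$. This produces the equation
\[\F f_\ast \then \phi_{B\ast} \eqrel \phi_{A\ast} \then \G f_\ast,\]
which is precisely the defining condition for $\phi_\ast : \F \natrel \G$ to be left-linear (with $\rho_A = \phi_{A\ast}$). This yields the first equivalence immediately, with no further work needed.

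For the second equivalence, I would take the converse of both sides of the same equation. Using $(x \then y)\converse \eqrel y\converse \then x\converse$ and $(f_\ast)\converse \eqrel f^\ast$, the equation above rewrites to
\[\phi_B^\ast \then \F f^\ast \eqrel \G f^\ast \then \phi_A^\ast,\]
which is precisely the right-linearity condition for $\phi^\ast : \G \natrel \F$. One must be a bit careful here: since $\phi_A^\ast : \G A \relation \F A$, the family $\phi^\ast$ points from $\G$ to $\F$, so the definition of right-linearity has to be instantiated with the roles of $\F$ and $\G$ swapped. Because $(\cdot)\converse$ is involutive, the rewriting is reversible, and so the three conditions are mutually equivalent.

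The statement therefore has essentially no content beyond unfolding the definitions; the only real obstacle is bookkeeping of types and directions of the relations involved, so as not to confuse which functor plays the role of source and which the role of target in the right-linearity instance for $\phi^\ast$. No pullback preservation or relation lifting is needed for this particular lemma, since the three equations are functional and can be compared directly at the level of graphs.
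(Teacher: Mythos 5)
Your proof is correct: naturality translates under the injective, functorial graph embedding into $\F f_\ast\then\phi_{B\ast}\eqrel\phi_{A\ast}\then\G f_\ast$, which is literally left-linearity of $\phi_\ast$, and taking converses yields $\phi_B^\ast\then\F f^\ast\eqrel\G f^\ast\then\phi_A^\ast$, i.e.\ right-linearity of $\phi^\ast$ read, as you correctly note, with the roles of $\F$ and $\G$ swapped since $\phi_A^\ast:\G A\relation\F A$. This is exactly the intended definitional unfolding (the paper states the lemma without proof), and you are right that the functional characterizations suffice, so no relation liftings or weak-pullback assumptions are needed here.
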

This observation in mind, we define:
\begin{definition}[linear transformation]
 A natural transformation $\phi$ is called \emph{linear} iff $\phi_\ast$ (or equivalently $\phi^\ast$) is a linear relation.
\end{definition}

In the previous example, the transformation $\ell$ is linear. This fact entails the linearity of $\approx\eqrel\ell_\ast\then\ell^\ast$.

Important examples of linear transformations arise from syntax monads.
Indeed, for any functional signature $\Sigma$, it is well known that we can endow the functor $T_\Sigma$ with a monad structure $\tuple<T_\Sigma,\mu_T,\eta_T>$.
It so happens that whatever the signature, the natural transformations $\eta_T$ and $\mu_T$ are linear.
This is not the case e.g. with the powerset monad, where $\mu_P$ fails to be linear.

\subsection{HORs}
We are now equipped to define a structure that tries to capture set-indexed collections of representations, that enjoy some regularity properties.
\begin{definition}[HOR]
 A \emph{higher-order representation} (HOR for short) is a structure $\R=\tuple<\T,\E,{\mmodels},{\leqq}>$ where $\E$ and $\T$ are a \Set endofunctors, ${\mmodels}:\T\natrel\E$ is a \emph{right-linear} relation, and ${\leqq}:\E\natrel\E$ is a \emph{natural} relation such that for every set $A$, $\R_A=\tuple<\T A,\E A,{\mmodels}_A,{\leqq}_A>$ is a representation.
 It is called \emph{relational} when both $\E$ and $\T$ preserve weak pullbacks.
\end{definition}
Note that requiring ${\mmodels}$ to be right-linear is logically equivalent to the interpretation function $I$ being a natural transformation $\E\Rightarrow \pset{\T}$.

These can be turned into functors $\Set\to\Repr$. Indeed, we extend $\R$ to morphisms by setting for a function $f:A\to B$:
\[\R f=\tuple<\E f,\T f^\ast>\]
\begin{lemma}
  The transformation defined on a set $A$ as $\R_A$ and on a function $f$ as $\R f$ is a functor $\Set\to\Repr$.
\end{lemma}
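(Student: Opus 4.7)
The plan is to unpack the definition of morphism in $\Repr$ and check that, at each set $A$, the pair $\R f = \tuple<\E f, \T f^\ast>$ satisfies the two required inequations, then verify the two functor axioms on top of that. I expect the real content is a dictionary between the two notions defined earlier (``natural'' for $\leqq$, ``right-linear'' for $\mmodels$) and the two conditions in the definition of a morphism. The rest is bookkeeping about how converses of functions compose.

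First I would show that $\R f : \R_A \to \R_B$ is a morphism. Order-preservation of $\E f$ asks $\E f^\ast \then \leqq_A \leqrel \leqq_B \then \E f^\ast$, which is exactly the naturality of $\leqq : \E \natrel \E$. The interpretation condition asks $\mmodels_B \then \E f^\ast \eqrel \T f^\ast \then \mmodels_A$, which is exactly the right-linearity of $\mmodels$ applied to $f$ (with $\F = \T$ and $\G = \E$). So both conditions come for free from the hypotheses in the definition of a HOR; there is nothing to compute.

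Next I would check preservation of identities. By functoriality of $\E$ and $\T$ on $\Set$, $\R \id_A = \tuple<\E \id_A, \T \id_A^\ast> = \tuple<\id_{\E A}, (\id_{\T A})^\ast> = \tuple<\id_{\E A}, 1_{\T A}>$, which is the identity morphism on $\R_A$ recorded just after the definition of a morphism.

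Finally I would check preservation of composition. For $f : A \to B$ and $g : B \to C$, the formula recalled in the paper gives $\R g \circ \R f = \tuple<\E g \circ \E f, \T g^\ast \then \T f^\ast>$. On the other hand, $\R(g \circ f) = \tuple<\E(g \circ f), \T(g \circ f)^\ast>$; functoriality of $\E$ yields the first component, and for the second we use that $(h \circ k)_\ast \eqrel k_\ast \then h_\ast$ for any composable $k,h$, so taking converses gives $\T(g \circ f)^\ast \eqrel \T g^\ast \then \T f^\ast$. The two tuples coincide, which concludes the argument. If there is any subtlety at all it is merely getting the order of composition of converses right; the ``hardest'' step is really just observing that the HOR axioms were chosen precisely to match the morphism axioms pointwise, so that functoriality reduces to functoriality of $\E$ and $\T$ on $\Set$. \qed
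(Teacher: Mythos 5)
Your proof is correct and follows the same route as the paper: the morphism conditions for $\R f$ are exactly the naturality of $\leqq$ and the right-linearity of $\mmodels$ instantiated at $f$. The paper leaves the identity and composition checks implicit, whereas you spell them out via functoriality of $\E$, $\T$ and the reversal of converses; this is just the bookkeeping the paper omits.
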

\begin{proof}
  Most properties follow directly from functoriality of $\E$ and $\T$.
 The only non-trivial thing to check that $\R$ is a functor is that $\R f$ is a morphism from $\R A$ to $\R B$:
 \begin{align*}
  \E f^\ast\then{\leqq}_A&\leqrel{\leqq}_B\then \E f^\ast\tag*{naturality of ${\leqq}$}\\
  {\mmodels}_B\then\E f^\ast&\eqrel\T f^\ast\then{\mmodels}_A\tag*{right-linearity of ${\mmodels}$}
 \end{align*}
\end{proof}

The question remains, are there other functors of the same type, that fail to be HORs?
We may answer in the affirmative, as we offer the following characterization:
\begin{proposition}\label{prop:functors}
 The functors $\Set\to\Repr$ are in one-to-one correspondence with structures $\tuple<\T,\E,{\mmodels},{\leqq}>$ where $\T:\Set\to\Rel$ and $\E:\Set\to\Set$ are functors, ${\mmodels}:\T\natrel\E$ and ${\leqq}:\E\natrel\E$ are natural relations, such that:
 \begin{mathpar}
  {\leqq}\eqrel{\leqq}\under{\leqq}\and
  {\mmodels}\then{\leqq}\leqrel{\mmodels}\and
  \forall f:A\to B,\,{\T f}\then {{\mmodels}}\eqrel{\mmodels}\then\E f^\ast.
 \end{mathpar}
\end{proposition}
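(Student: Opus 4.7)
The plan is to unfold the definition of a functor $\Set \to \Repr$ and match its components against the data listed in the proposition, checking that the resulting correspondence is a bijection.

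For the forward direction, given $\R: \Set \to \Repr$, write $\R A = \tuple<\T A, \E A, \mmodels_A, \leqq_A>$ on objects and $\R f = \tuple<\E f, \T f>$ on morphisms, so that $\E f: \E A \to \E B$ is a function and $\T f: \T B \relation \T A$ is a relation. The composition law for representation morphisms (forward on the first component, reversed on the second) together with functoriality of $\R$ makes $\E$ a \Set endofunctor and $\T$ an arrow-reversing functor into $\Rel$. The conditions then fall out pointwise: $\leqq \under \leqq \eqrel \leqq$ is the preorder axiom for each $\leqq_A$, the inequality $\mmodels \then \leqq \leqrel \mmodels$ is the soundness axiom for $\R A$, and the equation $\T f \then \mmodels_A \eqrel \mmodels_B \then \E f^\ast$ is exactly the satisfaction equation for the morphism $\R f$. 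Naturality of $\leqq$ corresponds to the order-preservation clause for $\R f$.

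For the backward direction, given the data $\tuple<\T, \E, \mmodels, \leqq>$, set $\R A \eqdef \tuple<\T A, \E A, \mmodels_A, \leqq_A>$ and $\R f \eqdef \tuple<\E f, \T f>$. The first two conditions ensure that each $\R A$ is a representation: $\leqq \under \leqq \eqrel \leqq$ makes $\leqq_A$ a preorder (as recalled in section~\ref{sec:prelim}), and $\mmodels \then \leqq \leqrel \mmodels$ is soundness. The third condition together with naturality of $\leqq$ makes $\R f$ a morphism. Functoriality of $\R$ on arrows is inherited from that of $\E$ and $\T$, using the composition formula for morphisms in $\Repr$. The two constructions are manifestly mutually inverse, as each proceeds by direct extraction or reassembly of the components.

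The main delicate point is not the logical content, which is essentially mechanical, but the bookkeeping around $\T$: morphisms $\tuple<\phi, \psi>$ in $\Repr$ have their relational component pointing opposite to the functional one, so $\T$ must be understood as mapping $f: A \to B$ to a relation $\T f: \T B \relation \T A$ (equivalently, as a functor $\Set^{op} \to \Rel$). This is why the third condition takes the shape $\T f \then \mmodels_A \eqrel \mmodels_B \then \E f^\ast$ rather than the other way around, and also explains how the naturality-of-$\mmodels$ requirement is absorbed into that equation. Once this convention is fixed, every verification reduces to an inspection or a one-line calculation.
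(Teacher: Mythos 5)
Your proposal is correct, and since the paper states Proposition~\ref{prop:functors} without giving an explicit proof, your mechanical unfolding is exactly the intended argument: the preorder and soundness axioms account for the first two conditions objectwise, order-preservation of $\E f$ is naturality of $\leqq$, the satisfaction equation for $\R f$ is the third condition, and functoriality of $\E$ and $\T$ is read off from (or reassembled into) functoriality of $\R$ via the componentwise identity and composition in \Repr. Your treatment of the variance of $\T$ --- taking $\T f$ to be a relation $\T B\relation\T A$, i.e.\ reading the paper's ``functor $\Set\to\Rel$'' as $\Set^{op}\to\Rel$ (or composing with converse), which is what makes ${\T f}\then{\mmodels}\eqrel\mmodels\then\E f^\ast$ typecheck and agrees with the component $\T f^\ast$ in the HOR case --- correctly settles the only delicate bookkeeping point.
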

Observe that the difference with a HOR is that $\T$ is a functor from \Set to \Rel instead of one from \Set to itself.
As a result, the condition on ${\mmodels}$ differs slightly from right-linearity, and the characterization via interpretations is less clear as the composite functor $\pset \T$ is ill-defined.

\subsection{Lifting HORs}

Relational HORs can be first extended as functors from the category \PreO of preordered sets to \Repr:
\begin{definition}[$\tilde\R$]
  Given a relational HOR $\R$ and a preorder $\tuple<A,{\leq}_A>$ we define:
  \[\tilde\R{\tuple<A,{\leq}_A>}\eqdef\tuple<\T A,\E A,\paren(\bar\T{{\leq}_A}\then{\mmodels}),\paren(\bar\E{{\leq}_A}\cup{\leqq})^\star>.\]
\end{definition}
\begin{lemma}
 $\tilde\R$ is a functor $\PreO\to\Repr$.
\end{lemma}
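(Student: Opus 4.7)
The plan is to verify, in sequence: (i) each $\tilde\R\tuple<A,\leq_A>$ is a representation, (ii) each monotone $f$ induces a morphism $\tilde\R f$ in $\Repr$, and (iii) these assignments are functorial. For (i), the preorder $(\bar\E{\leq_A}\cup\leqq_A)^\star$ is reflexive-transitive by construction of the reflexive-transitive closure. I would first observe that $\bar\T{\leq_A}$ is itself a preorder on $\T A$: since $\bar\T$ is a relator, preserving identity and composition of relations, and $\leq_A$ is a preorder, reflexivity and transitivity lift. A star-induction then reduces the soundness axiom $\bar\T{\leq_A}\then\mmodels_A\then(\bar\E{\leq_A}\cup\leqq_A)^\star\leqrel\bar\T{\leq_A}\then\mmodels_A$ to the single step $\mmodels_A\then(\bar\E{\leq_A}\cup\leqq_A)\leqrel\bar\T{\leq_A}\then\mmodels_A$. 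Its $\bar\E{\leq_A}$-summand is exactly right-linearity of $\mmodels$ applied to the relation $\leq_A$; its $\leqq_A$-summand combines soundness of $\R_A$ (i.e.\ $\mmodels_A\then\leqq_A\leqrel\mmodels_A$) with reflexivity of $\bar\T{\leq_A}$.

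For (ii), I would set $\tilde\R f\eqdef\tuple<\E f,\T f^\ast>$, mirroring the HOR-to-$\Repr$ assignment of the previous lemma. Order preservation $\E f^\ast\then(\bar\E{\leq_A}\cup\leqq_A)^\star\leqrel(\bar\E{\leq_B}\cup\leqq_B)^\star\then\E f^\ast$ reduces by star-induction to two commutations: monotonicity of $f$ (i.e.\ $f^\ast\then\leq_A\leqrel\leq_B\then f^\ast$), lifted through the relator $\bar\E$, yields $\E f^\ast\then\bar\E{\leq_A}\leqrel\bar\E{\leq_B}\then\E f^\ast$, while naturality of $\leqq$ gives $\E f^\ast\then\leqq_A\leqrel\leqq_B\then\E f^\ast$ directly. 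Part (iii) then reduces to functoriality of $\E$ and $\T$ on underlying sets.

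The hard part, and the main obstacle I foresee, is the interpretation equation $\bar\T{\leq_B}\then\mmodels_B\then\E f^\ast\eqrel\T f^\ast\then\bar\T{\leq_A}\then\mmodels_A$. Right-linearity of $\mmodels$ rewrites the left-hand side as $\bar\T{\leq_B}\then\T f^\ast\then\mmodels_A$, and lifting monotonicity of $f$ through $\bar\T$ yields the inclusion $\T f^\ast\then\bar\T{\leq_A}\leqrel\bar\T{\leq_B}\then\T f^\ast$, giving one direction. The reverse direction does not follow from monotonicity alone; I expect to close it either by refining the trace-component of $\tilde\R f$ to $\bar\T{\leq_B}\then\T f^\ast$, so the equation balances by transitivity of $\bar\T{\leq_B}$ with identity preservation then read modulo the absorption $\bar\T{\leq_A}\then\mmodels_A\eqrel\mmodels_{\tilde A}$ built into $\tilde\R\tuple<A,\leq_A>$, or by directly exploiting idempotence of $\mmodels_{\tilde A}$ under left-composition with $\bar\T{\leq_A}$.
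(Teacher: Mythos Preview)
Your part (i) is essentially the paper's argument. The paper reduces soundness, via the Kleene-algebra laws for ${}^\star$, to the two single-step inclusions
\[
\bar\T{\leq_A}\then\mmodels\then\bar\E{\leq_A}\leqrel\bar\T{\leq_A}\then\mmodels
\quad\text{and}\quad
\bar\T{\leq_A}\then\mmodels\then\leqq\leqrel\bar\T{\leq_A}\then\mmodels,
\]
handling the first by right-linearity of $\mmodels$ at $\leq_A$ together with $\bar\T(\leq_A\then\leq_A)\leqrel\bar\T{\leq_A}$, and the second by soundness of $\R_A$. That is the whole of the paper's proof: it does not define $\tilde\R$ on \PreO-arrows, and neither the morphism condition nor the functor laws are addressed.

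You go further, and the obstacle you flag in (ii) is real, not an artefact of your decomposition. With the candidate $\tilde\R f=\tuple<\E f,\T f^\ast>$, only the inclusion $\T f^\ast\then\bar\T{\leq_A}\then\mmodels_A\leqrel\bar\T{\leq_B}\then\mmodels_B\then\E f^\ast$ holds; the converse would need $\leq_B\then f^\ast\leqrel f^\ast\then\leq_A$, which already fails for the relational HOR $\tuple<Id,Id,1,1>$ and the monotone map $f\colon\set{0}\to\set{0\leq 1}$ with $f(0)=1$. Your repair $\psi=\bar\T{\leq_B}\then\T f^\ast$ does restore the interpretation equality and is compatible with composition, but then $\tilde\R(\id)=\tuple<\id_{\E A},\bar\T{\leq_A}>\neq\tuple<\id_{\E A},1_{\T A}>$ unless $\leq_A$ is discrete. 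In short, the paper omits exactly the part you are uneasy about, and your analysis exposes a genuine gap in the lemma rather than a defect in your plan: with the evident morphism assignments $\tilde\R$ is at best a lax or pseudo-functor, not a strict one.
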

\begin{proof}
 $\paren(\bar\E{{\leq}_A}\cup{\leqq})^\star$ is quite obviously a preorder, so we check soundness. Thanks to the laws of relation algebra (specifically the KA fragment) this amounts to checking the following pair of inclusions:
 \begin{align*}
  \bar\T{{\leq}_A}\then{\mmodels}\then\bar\E{{\leq}_A}&\leqrel\bar\T{{\leq}_A}\then\bar\T{{\leq}_A}\then{\mmodels}\eqrel\bar\T{\paren({\leq}_A\then{\leq}_A)}\then{\mmodels}\leqrel\bar\T{{\leq}_A}\then{\mmodels}\\
  \bar\T{{\leq}_A}\then{\mmodels}\then{\leqq}&\leqrel\bar\T{{\leq}_A}\then{\mmodels}.\qedhere
 \end{align*}
 \end{proof}
 \paragraph{Monoids over an ordered set}
 This lifting works very well in the case of the free monoid over an ordered set of generators.
 Indeed, let us define the following HOR: $\Mon=\tuple<\List,M,I^\ast,=_{\Mon}>$, where $\List$ is the list functor, $M$ is the term functor $T_{\set{\otimes^{(2)},\unit^{(0)}}}$, $I$ is the natural transformation mapping $\unit$ to the empty list, and products $u\otimes v$ to concatenations $I(u)\cdot I(v)$, and $=_{\Mon}$ is the least congruence on $M(A)$ that contains the following axioms, for every term $u,v,w\in M(A)$:
 \begin{mathpar}
  u\otimes(v\otimes w)=_{\Mon}(u\otimes v)\otimes w\and
  u\otimes\unit=_{\Mon}u=_{\Mon}\unit\otimes u
 \end{mathpar}
 Applying the above construction to $\Mon$ over some preorder $\tuple<A,\trleq>$ yields the structure
\[\tilde \Mon{\tuple<A,\trleq>}\eqdef\tuple<\List\,A,MA,\paren(\bar {\List}{\trleq}\then I^\ast),\paren(\bar M{\trleq}\cup =_{\Mon})^\star>.\]
We may further understand the satisfaction by realizing that $\bar{\List}{\trleq}$ is simply the pointwise application of the ordering between lists of the same length. In other words:
\begin{mathpar}

 [a_1,\dots,a_n]\mathrel{\bar{\List}{\trleq}}[b_1,\dots,b_m]\and
 \Leftrightarrow\and
 n=m\wedge \forall i,a_i\trleq b_i.
\end{mathpar}
  The preorder on expressions may also be reformulated as the least preorder $\tilde{\leq}$ such that:
  \begin{mathpar}
   \inferrule{u\mathrel{\tilde{\leq}} u'\and v\mathrel{\tilde{\leq}} v'}{u\otimes v\mathrel{\tilde{\leq}} u'\otimes v'}\and
   \inferrule{u =_{\Mon} v}{u\mathrel{\tilde{\leq}} v}\and
   \inferrule{a \trleq  b}{a\mathrel{\tilde{\leq}} b}
  \end{mathpar}

 Further, we may extend relational HORs as functors $\Repr\to\Repr$, thus yielding representations parametrized by other representations.
 The result is a two-tiered system, where the HOR specifies the outer (dynamic) part of the model (e.g. trace semantics of regular expressions), and the parameter handles the inner (static) part of the model (e.g. boolean algebra to denote memory states as bit-vectors).
 \begin{definition}[$\hat\R$]
  Given a relational HOR $\R$ and representation $R$, we define
  \[\hat\R(R)\eqdef\tuple<\T T,\E E,\paren(\T{{\models}}\then{\mmodels}),\paren(\E{{\leq}}\cup{\leqq})^\star>.\]
 \end{definition}
 \begin{lemma}
  $\hat\R$ is a functor $\Repr\to\Repr$.
 \end{lemma}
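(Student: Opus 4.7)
The plan is to mirror the proof of the analogous lemma for $\tilde\R$: first show that $\hat\R(R)$ is a representation, then define $\hat\R$ on morphisms and verify the two morphism conditions, and finally observe functoriality.

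For $\hat\R(R)$ to be a representation, only soundness requires work, as $\paren(\bar\E{\leq}\cup\leqq_E)^\star$ is obviously a preorder. By the laws of Kleene algebra, soundness reduces to two closure properties for $\bar\T{\models}\then\mmodels_E$: closure under $\leqq_E$, which is immediate from $\R_E$ being a representation in the HOR $\R$; and closure under $\bar\E{\leq}$, which follows from right-linearity of $\mmodels$ applied to the relation $\leq:E\relation E$ (giving $\mmodels_E\then\bar\E{\leq}\leqrel\bar\T{\leq}\then\mmodels_E$), combined with soundness of $R$ and the functoriality of $\bar\T$.

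For morphisms, given $\tuple<\phi,\psi>:R_1\to R_2$ I set $\hat\R\tuple<\phi,\psi>\eqdef\tuple<\E\phi,\bar\T\psi>$. The satisfaction condition unfolds as $\bar\T\models_2\then\mmodels_{E_2}\then\E\phi^\ast\eqrel\bar\T\paren(\models_2\then\phi^\ast)\then\mmodels_{E_1}$, using the functional form of right-linearity (which gives an equality on function graphs) together with functoriality of $\bar\T$; the morphism property $\models_2\then\phi^\ast\eqrel\psi\then\models_1$ then finishes the job. The order-preservation condition is the trickier half: I first establish the single-step inclusions $\E\phi^\ast\then\bar\E{\leq_1}\leqrel\bar\E{\leq_2}\then\E\phi^\ast$ (by applying $\bar\E$ to $\phi^\ast\then\leq_1\leqrel\leq_2\then\phi^\ast$) and $\E\phi^\ast\then\leqq_{E_1}\leqrel\leqq_{E_2}\then\E\phi^\ast$ (naturality of $\leqq$), take their union, and lift the resulting one-step simulation through the Kleene star by induction on the number of steps.

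Functoriality is then routine: identities are preserved because $\E\id=\id$ and $\bar\T(1)=1$, and composites are preserved because $\E$ is a \Set-functor and $\bar\T$ respects relation composition. The main obstacle I expect to manage is keeping straight the relation liftings $\bar\T$, $\bar\E$ versus the ordinary functor action, and invoking the correct form of linearity in each spot (strict equality for functions, inclusion for general relations); the star-lifting step itself is standard KA reasoning but needs to be spelled out.
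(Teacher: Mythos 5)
Your proof is correct; the paper in fact omits a proof of this lemma, and your argument supplies exactly what its proofs of the analogous statements provide elsewhere — the soundness check mirrors the $\tilde\R$ lemma (splitting the starred preorder into the $\leqq_E$ step, handled by $\R_E$ being a representation, and the $\bar\E{\leq}$ step, handled by right-linearity of $\mmodels$ plus soundness of $R$ and functoriality of $\bar\T$), while the morphism action $\tuple<\E\phi,\bar\T\psi>$ and its two conditions mirror the proof that $\R f=\tuple<\E f,\T f^\ast>$ is a morphism, now with the relational lifting in the trace component. The only additional ingredients — $\bar\T(\phi^\ast)\eqrel\T\phi^\ast$, preservation of identities and composition by the liftings, and the KA simulation law to push the two one-step inclusions through the Kleene star — are standard facts about Barr extensions and Kleene algebra that the paper already relies on, so there is no gap.
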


The caveat is that these construction do not automatically preserve exactness.
Indeed in many cases, even if $\R$ was exact (as in, every $\R A$ is exact), $\hat\R$ may need more axioms to be exact.

\paragraph{Application to Kleene algebra}
A number of extensions of Kleene algebra can be construed as HORs.
In a nutshell, those extensions that do not depend on their alphabet in an essential way give rise to well-behaved HORs.
For instance, KA itself is a very neat HOR, which can be defined like so:
\[\mathrm{KA}(A)=\tuple<\List\,A,\Reg(A),\paren({\in}\then L^*),{\leq}_{\mathrm{KA}}>\]
(where $L:\Reg(A)\to \pset{\List\,A}$ is the function mapping an expression to the regular language it denotes.)
$L$ is easily shown to be a natural transformation, and ${\leq}_{\mathrm{KA}}$ is a natural relation if we take Kozen's axiomatization (which is stable under substitutions).

We may frame biKA~\cite{laurence_completeness_2014} in the same way, and for the same reasons.
Perhaps more surprising, CKA~\cite{kappe_concurrent_2018} is amenable to the same treatment, by setting:
\[\mathrm{CKA}(A)=\tuple<\Pom\,A,\mathrm{bi}\Reg(A),\paren({\sqsubseteq}\then{\in}\then L^*),{\leq}_{\mathrm{CKA}}>\]
(where $L$ is the biKA interpretation function, mapping regular expressions with~$\parallel$ to sets of series-parallel pomsets, and $\sqsubseteq$ is the subsumption order.)
The fact that subsumption happens to be a linear relation (as in Definition~\ref{def:linear}) makes the CKA semantic relation right-linear, and thus gives us a well-behaved HOR.

\newpage
\section{Conclusions}
\label{sec:concl}
\subsection{Related work}
An important related theory is that of institutions~\cite{goguen_institutions_1992}.
In institutions one also encapsulates in the same structure some syntactic and some semantic information.
But when a representation tries to capture one system of reasoning, an institution contains many systems.
For that reason the definition of an institution is more demanding than ours: it requires syntax and semantics to be equipped with category structures, as well as compatibility conditions between them.
A legitimate question is then what are the classes of representations that can be turned into institutions?
Or, conversely, can representations be extracted from an institution?
We leave these questions open.

Specification  theories~\cite{fahrenberg_linear-timebranching-time_2020}, introduced by Fahrenberg and Legay, is another related formalism. (A similar formalism was introduced (but not named) in~\cite{aceto_when_2019}.)
A specification theory may be understood as a structure $\tuple<T,E,\chi,{\leq}>$, where $\chi:T\to E$ mapping a trace to its \emph{characteristic expression}.
It can be turned into a representation by setting ${\models}\eqrel\chi_\ast\then{\leq}$, i.e. by defining $I(e)=\setcompr{t\in T}{\chi(t)\leq e}$.
Specification theories are thus special instances of representations, where traces may be represented as expressions, and where satisfaction is derived from the preorder on expressions.
In~\cite{fahrenberg_linear-timebranching-time_2020}, it was shown that specification theories capture a number of interesting formalisms.
We plan on exploring how the representation point of view can help in this context in a future paper.

\subsection{On the meta-meta-theory}
In this paper, our meta-theory of SAS is built inside the categories \Set and \Rel.
However, the keen reader will recognize that we use an axiomatic presentation of \Rel in our proofs, rather than one based on the definition of operations on sets of pairs.
This suggests generalizations of this work may be within reach, by abstracting away \Rel for a more general category equipped with the appropriate operators.
Unfortunately, it seems that all existing formalisms (that the author is aware of) that are expressive enough have extraneous assumptions.
We review some below.

Most participants of RAMiCS would for instance suggest using Relation Algebra, as axiomatised by Tarski~\cite{tarski_calculus_1941}, as a meta theory.
This works nicely enough, but as we do not use negations, and since these are well-recognized sources of complexity in proof systems, we would rather consider a more parsimonious model.
The question then boils down to which is the ``correct'' fragment of Relation Algebra that is necessary here.

In that direction, a strong contender is the notion of Allegory~\cite{bird_algebra_1997,freyd_categories_1990}.
This seems better aligned with our needs, and indeed our theory could be largely formulated in the language of e.g. locally complete allegories or division allegories.
However, these require assumptions that we do not use, and that seem quite demanding.
Mainly, any allegory is required to satisfy the \emph{modular law} (a.k.a. Dedekind's rule).
This requires a particular relationship between composition, intersection and converse that plays no role in our development.

We plan to investigate these questions more thoroughly in a future paper, by defining a new kind of abstract structure that captures exactly our requirements without extra hypotheses.
With any luck, interesting instances of these structures will yield novel systems of analysis, or at least novel ways to study them.

\subsection{Directions for future research}

This paper is intended to be the first step of a broader research program.
Beyond enhancing and consolidating the results in this paper, we aim to apply this theory to concrete verification tasks.
We hope such an endeavour will help build useful representations, and guide us towards more relevant constructions on representations.
This activity will be accompanied by a classification effort, to identify useful classes of representations.
We will also strive to formalize our results in the Rocq proof system.

\pagebreak
\bibliographystyle{plain}
\bibliography{biblio_ramics2026}

\appendix



\end{document}